\documentclass[journal,onecolumn,11pt]{IEEEtran} 
\newcommand{\url}[1]{#1}
\usepackage{graphicx}
\usepackage{amssymb,amsopn,amsmath,amsthm}
\usepackage{subfigure}

\newcommand{\eps}{\varepsilon}
\newcommand{\cP}{\mathcal{P}}
\newcommand{\cX}{\mathcal{X}}
\newcommand{\cY}{\mathcal{Y}}
\newcommand{\cZ}{\mathcal{Z}}
\newcommand{\cM}{\mathcal{M}}

\newtheorem{lemma}{Lemma}

\newtheorem{proposition}[lemma]{Proposition}
\newtheorem{corollary}[lemma]{Corollary}
\newtheorem{theorem}[lemma]{Theorem}

\DeclareMathOperator{\Exp}{\mathbb{E}}
\DeclareMathOperator{\Var}{Var}
\newcommand{\lEC}{\textnormal{leak}_{EC}}

\DeclareMathOperator{\diff}{d}

\newcommand{\href}[2]{#2}

\begin{document}

\title{Fundamental Finite Key Limits for One-Way Information Reconciliation in Quantum Key Distribution}

\author{Marco Tomamichel$^\dagger$ \and Jesus Martinez-Mateo \and  Christoph Pacher \and David Elkouss
\thanks{Marco Tomamichel is with Centre for Quantum Software and Information, University of Technology Sydney, NSW 2007, Australia. 
Jes\'us Mart\'inez-Mateo is with Center for Computational Simulation, Universidad Politecnica de Madrid, 28660 Boadilla del Monte, Spain. 
Christoph Pacher is with Digital Safety \& Security Department, AIT Austrian Institute of Technology, Donau-City-Stra{\ss}e 1, 1220 Vienna, Austria.
David Elkouss is with  QuTech, Delft University of Technology, P.O. Box 5046, 2600 GA Delft, The Netherlands. }
\thanks{Part of these results without the technical derivations were published in the proceedings of the International Symposium on Information Theory,  Honolulu (2014). \cite{Tomamichel_14}}
   \thanks{$^\dagger$\url{marco.tomamichel@sydney.edu.au}   }
   }
 \maketitle

\begin{abstract}
The security of quantum key distribution protocols is guaranteed by the laws of quantum mechanics. However, a precise analysis of the security properties requires tools from both classical cryptography and information theory. Here, we employ recent results in non-asymptotic classical information theory to show that one-way information reconciliation imposes fundamental limitations on the amount of secret key that can be extracted in the finite key regime. In particular, we find that an often used approximation for the information leakage during information reconciliation is not generally valid. We propose an improved approximation that takes into account finite key effects and numerically test it against codes for two probability distributions, that we call binary-binary and binary-Gaussian, that typically appear in quantum key distribution protocols.
\end{abstract}

\section{Introduction}
\label{intro}
Quantum key distribution (QKD)~\cite{bb84,ekert91} is a prime example of the interdisciplinary nature of quantum cryptography and the first application of quantum science that has matured into the realm of engineering and commercial development. While the security of the generated key is intuitively guaranteed by the laws of quantum mechanics, a precise analysis of the security requires tools from both classical cryptography and information theory (see~\cite{mayers01,SP00} for early security proofs, and see~\cite{scarani09a} for a comprehensive review). This is particularly relevant when investigating the security of QKD in a practical setting where the resources available to the honest parties are finite and the security analysis consequently relies on non-asymptotic information theory.

In the following, we consider QKD protocols between two honest parties, Alice and Bob, which can be partitioned into the following rough steps. In the \emph{quantum phase}, $N$ physical systems are prepared, exchanged and measured by Alice and Bob. In the \emph{parameter estimation (PE) phase}, relevant parameters describing the channel between Alice and Bob are estimated from correlations measured in the quantum phase. If the estimated parameters do not allow extraction of a secure key, the protocol aborts at this point.
Otherwise, the remaining measurement data is condensed into two highly correlated bit strings of length $n$ in the \emph{sifting phase}\,---\,the \emph{raw keys} $X^n$ for Alice and $Y^n$ for Bob~\cite{Pfister_15}. 
We call $n$ the block length and it is the quantity that is usually limited by practical considerations (time interval between generated keys, amount of key that has to be discarded in case Alice and Bob create different keys, hardware restrictions). In the \emph{information reconciliation (IR) phase}, Alice and Bob exchange classical information about $X^n$ over a public channel in order for Bob to compute an estimate $\hat{X}^n$ of $X^n$. The \emph{confirmation (CO) phase} ensures that $\hat{X}^n=X^n$ holds with high probability, or it aborts the protocol. Finally, in the \emph{privacy amplification (PA) phase}, Alice and Bob distill a shared secret key of $\ell$ bits from $X^n$ and $\hat{X}^n$. We say that a protocol is \emph{secure} if (up to some error tolerance) both Alice and Bob hold an identical, uniform key that is independent of the information gathered by an eavesdropper during the protocol, for any eavesdropper with access to the quantum and the authenticated classical channel. 

The ratio $\ell/N$ is constrained by the following effects:
1) Some measurement results are published for PE and subsequently discarded. 
2) The sifting phase removes data that is not expected to be highly correlated, thus further reducing the length $n$ of the raw key. 
3) Additional information about the raw keys is leaked to the eavesdropper during the IR and CO phase.
4) To remove correlations with the eavesdropper, $X^n$ and $\hat{X}^n$ need to be purged in the PA phase, resulting in a shorter key. 

Some of these contributions vanish asymptotically for large $N$ while others approach fundamental limits.
\footnote{Consider, for example, BB84 with asymmetric basis choice~\cite{lo04} on a channel with quantum bit error rate $Q$. There, contributions 1) and 2) vanish asymptotically while contributions 3) and 4) converge to $h(Q)$.}

Modern tools allow to analyze QKD protocols that are secure against the most general attacks. They provide lower bounds on the number of secure key bits that can be extracted for a fixed block length,~$n$. For the BB84 protocol, such proofs are for example given in~\cite{scarani08,renner05} and~\cite{hayashi06-2}. These proofs were subsequently simplified to achieve better key rates in~\cite{tomamichellim11} and~\cite{hayashi11}, respectively. (See also~\cite{Tomamichel_15} for a recent detailed proof.) All results have in common that the key rate that can be achieved with finite resources is strictly smaller than the asymptotic limit for large $n$\,---\,as one would intuitively expect. 

We are concerned with a complementary question: Given a secure but otherwise arbitrary QKD protocol for a fixed $n$, are there fundamental upper bounds on the length of the key that can be produced by this protocol? Such bounds are of theoretical as well as practical interest since they provide a benchmark against which contemporary implementations of QKD can be measured. In the asymptotic regime of large block lengths, such upper bounds have already been investigated, for example in~\cite{moroder06}. Here we limit the discussion to IR and focus on bounds that solely arise due to finite block lengths (Sec.~\ref{sec:flimits}). We complement the bounds with a numerical study of achievable leak values with LDPC codes (Sec.~\ref{sec:results}), and study some possible improvements and open issues (Sec.~\ref{sec:conclusion}).

\section{Fundamental limits for one-way reconciliation}
\label{sec:flimits}

We consider \emph{one-way} IR protocols, where Alice first computes a syndrome, $M \in \cM$, from her raw key, $X^n$, and sends it to Bob who uses the syndrome together with his own raw key, $Y^n$, to construct an estimate $\hat{X}^n$ of $X^n$. We will assume that $X$ takes values in a discrete alphabet while we allow $Y$ to take values in the real line. We are interested in the size of the syndrome (in bits), denoted $\log |\cM|$, and the probability of error, $\Pr[X^n \neq \hat{X}^n]$. In most contemporary security proofs $\log |\cM|$ enters the calculation of the key rate rather directly.
\footnote{Recent works analyzing the finite block length behavior using this approximation include~\cite{scarani08,cai09,tomamichellim11,hayashi11,bratzik10,abruzzo11,lim12}.} More precisely, to achieve security it is necessary (but not sufficient) that 

\begin{align}
  \ell \leq n - \lEC \label{eq:ell},
\end{align}

where $\lEC$ is the amount of information leaked to the eavesdropper during IR. Since it is usually impossible to determine $\lEC$ precisely, this term is often bounded as $\lEC \leq \log |\cM|$.
In the following, we are thus interested in finding lower bounds on $\log |\cM|$.

Let $f_{XY}$ be a probability density function. We say that an IR protocol is \emph{$\eps$-correct} on $f_{XY}$ if it satisfies $\Pr[X^n \neq \hat{X}^n] \leq \eps$ when $X^n$ and $Y^n$ are distributed according to $(f_{XY})^{\times n}$. Any such protocol (under weak conditions on $f_{XY}$ and for small $\eps$) satisfies $\frac{1}{n} \log |\cM| \geq H(X|Y)_f$~\cite{tan12}. Moreover, equality can be achieved for $n \to \infty$~\cite{slepian73}. On first sight, it thus appears reasonable to compare the performance of a finite block length protocol by comparing $\log |\cM|$ with its asymptotic limit. In fact, for the purpose of numerical simulations, the amount of one-way communication from Alice to Bob required to perform IR is usually approximated as $\lEC \approx \xi \cdot n H(X|Y)_f$, where $\xi > 1$ is the reconciliation efficiency. The constant $\xi$ is often chosen in the range $\xi = 1.05$ to $\xi = 1.2$. However, this choice is scarcely motivated and independent of the block length, the bit error rate and the required correctness considered.

\begin{figure}
\centering
\includegraphics[width=0.6\textwidth]{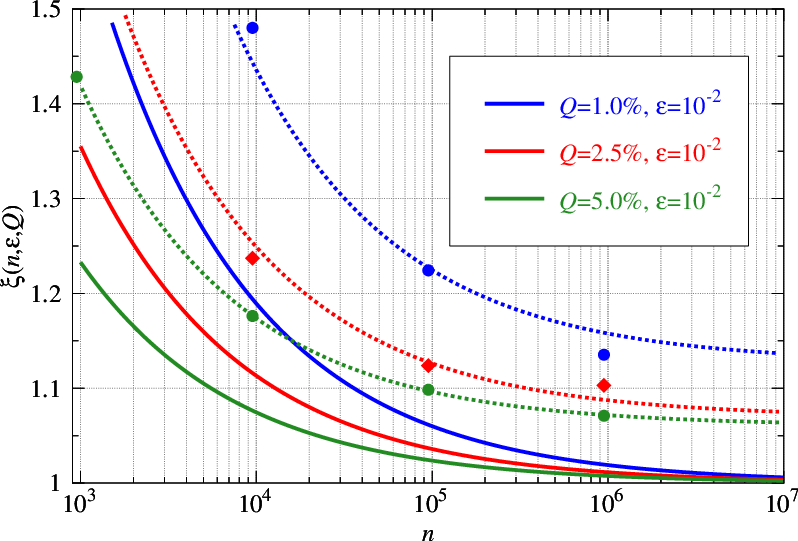}
\includegraphics[width=0.6\textwidth]{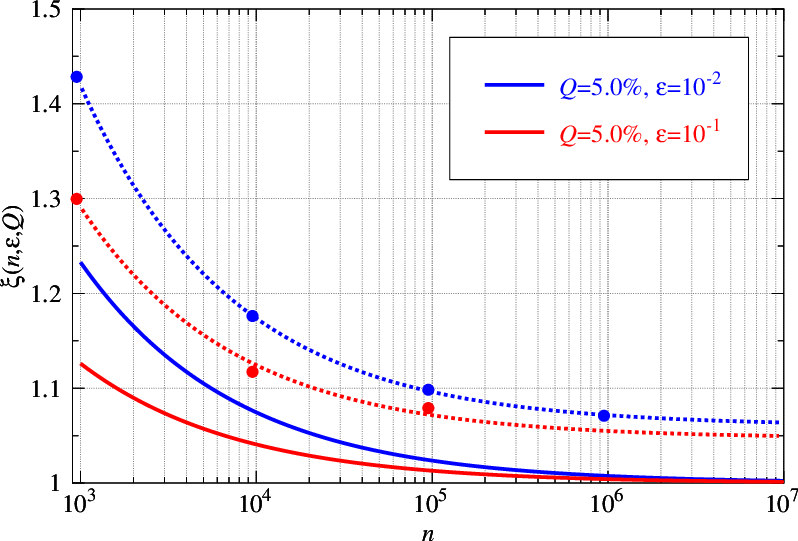}
\caption{The solid lines show the fundamental limit of the efficiency for the binary-binary distribution, $\xi(n, \eps; Q)$, as a function of $n$ for different values of $Q$ and $\eps$. The dotted lines show fits (see Table \ref{tab:xi}) to Eq.~\eqref{eq:fit} for simulated LDPC codes (marked with symbols).}
\label{fig:xi}
\end{figure}

\begin{figure}
\centering
\includegraphics[width=0.6\textwidth]{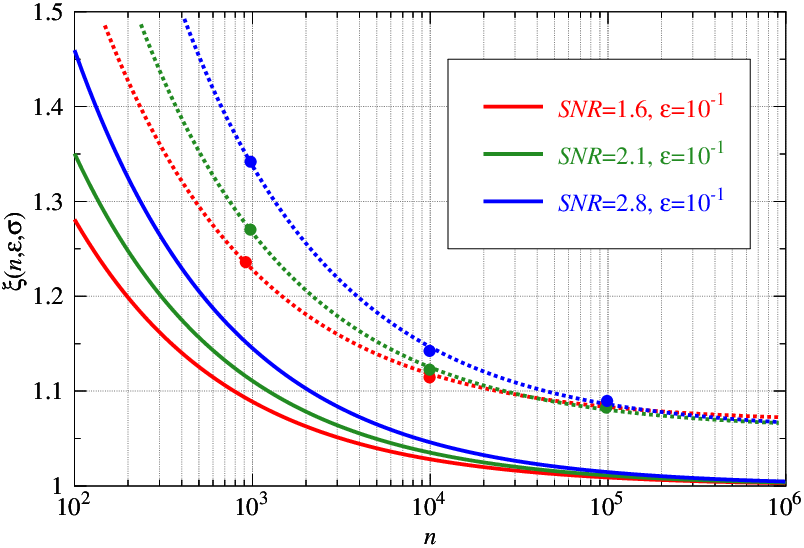}
\caption{As in Fig.~\ref{fig:xi} the solid lines show the fundamental limit of the efficiency but for the binary-Gaussian distribution, $\xi(n, \eps; \sigma)$, as a function of $n$ for different signal-to-noise ratios (SNR) and $\eps$ values.
}
\label{fig:xi-awgn}
\end{figure}

Here, we argue that this approximation is unnecessarily rough in light of recent progress in non-asymptotic information theory. Strassen~\cite{strassen62} already observed in the context of noisy channel coding that the asymptotic expansion of the fundamental limit for large $n$ admits a Gaussian approximation. This approximation was recently refined by Polyanskiy \emph{et al.}~\cite{polyanskiy10} (see also~\cite{hayashi09}). The problem of information reconciliation\,---\,also called source compression with side information\,---\,was investigated by Hayashi~\cite{hayashi08} and recently by Tan and Kosut~\cite{tan12}. Here we go slightly beyond this and provide bounds on the asymptotic expansion up to third order:

\begin{theorem}
\label{th:gen}
Let $0 < \eps <\! 1$ and $f_{XY}\!$ arbitrary. Then, for large $n$, any $\eps$-correct IR protocol on $f_{XY}$ satisfies
\begin{align*}
	\log |\cM| &\geq n H(X|Y) + \sqrt{n V(X|Y)}\, \Phi^{-1}(1-\eps) - \frac12 \log n - O(1) \,.
\end{align*}

Furthermore, there exists an $\eps$-correct IR protocol with
\begin{align*}
	\log |\cM| &\leq n H(X|Y) + \sqrt{n V(X|Y)}\, \Phi^{-1}(1-\eps) + \frac12 \log n + O(1),
\end{align*}
where $\Phi$ is the cumulative standard normal distribution, 
\begin{equation}
	H(X|Y) := \Exp \left[-\log \frac{f_{XY}}{f_{Y}} \right]
\end{equation}
is the conditional entropy and 
\begin{equation}
\label{eq:condentvar}
	V(X|Y) := \Var \left[ -\log \frac{f_{XY}}{f_{Y}}\right]
\end{equation}
is the conditional entropy variance.
\end{theorem}

The proof uses standard techniques, namely Yassaee~\emph{et al.}'s achievability bounds~\cite{yassaee13} and an analogue of the meta-converse~\cite{polyanskiy10}. 
Note that the gap of $\log n$ between achievable and converse bounds for general distributions leaves room for improvements. In channel coding, the gap is at most $\frac12 \log n$, and constant for certain channels (see, e.g.,~\cite{tomamicheltan12,altug13,tantomamichel13} for recent work on this topic).

We are in particular interested in two situations that typically appear in QKD.

\subsection{Binary Variable QKD}

We first look at binary variable protocols, such as BB84~\cite{bb84} or the $6$-state protocol~\cite{bruss98}, in the absence of an active eavesdropper. In this situation, the raw keys $X$ and $Y$ result from measurements on a channel with independent quantum bit error rate $Q$. The distribution $(P_{XY}^Q)^n$, that we call the binary-binary distribution, describes a typical manifestation of two random strings for which the expected bit error rate is $Q$. Here, we (at least) require $\eps$-correctness for the distribution
\begin{align}
P_{XY}^Q(0,0) &= P_{XY}^Q(1,1) = \frac{1-Q}{2} , \qquad \textrm{and} \nonumber\\
P_{XY}^Q(0,1) &= P_{XY}^Q(1,0) = \frac{Q}{2} \,. \label{eq:PXYQ}
\end{align}
We show the following, specialized bounds:
\begin{corollary}
\label{th:xi}
Let $0 < \eps < 1$ and let $0 < Q < \frac{1}{2}$. Then, for large $n$, any $\eps$-correct IR protocol satisfies
\begin{equation}
	\log |\cM| \geq \xi(n,\eps; Q)\cdot n h(Q) - \frac12 \log n - O(1),
	\label{eq:xi1} 
\end{equation}
where
\begin{equation}
	\xi(n, \eps; Q) := 1 + \frac{1}{\sqrt{n}} \, \frac{\sqrt{ v(Q) } }{ h(Q) } \Phi^{-1}(1\!-\!\eps) .   \nonumber 
\end{equation}

Here, $h(x) = - x \log x - (1-x)\log (1-x)$ and $v(x) = x(1-x) \log^2 \big( x/(1-x) \big)$. Furthermore, there exists an $\eps$-correct IR protocol with $\log |\cM| \leq \xi(n,\eps; Q)\cdot n h(Q) + \frac12 \log n + O(1)$.
\end{corollary}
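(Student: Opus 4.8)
The plan is to obtain Corollary~\ref{th:xi} as a direct specialization of Theorem~\ref{th:gen} to the distribution $P_{XY}^Q$ of~(\ref{eq:PxyQ}). The only real work is to evaluate the two information quantities $H(X|Y)$ and $V(X|Y)$ for this particular $P_{XY}^Q$; once these are in hand, the claimed bounds follow by substitution and elementary algebra.

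First I would compute the marginal $P_Y$. Summing~(\ref{eq:PxyQ}) over $x$ gives $P_Y(0) = P_Y(1) = \frac12$, so $Y$ is uniform. Consequently the random variable $Z := \log \frac{P_Y(Y)}{P_{XY}(X,Y)}$ that appears in the definitions of $H(X|Y)$ and $V(X|Y)$ takes only two values: it equals $-\log(1-Q)$ on the event $X = Y$, which has probability $1-Q$, and it equals $-\log Q$ on the event $X \neq Y$, which has probability $Q$.

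From this two-point description the moments are immediate. The mean is $\Exp[Z] = -(1-Q)\log(1-Q) - Q\log Q = h(Q)$, so $H(X|Y) = h(Q)$. For the variance, a random variable taking values $a$ and $b$ with probabilities $1-Q$ and $Q$ has variance $Q(1-Q)(a-b)^2$; here $a - b = \log\frac{Q}{1-Q}$, hence $V(X|Y) = Q(1-Q)\log^2\!\big(Q/(1-Q)\big) = v(Q)$. The hypothesis $0 < Q < \frac12$ guarantees $h(Q) > 0$, so that dividing by $h(Q)$ in the definition of $\xi(n,\eps;Q)$ is legitimate, and it also gives $v(Q) > 0$, so the Gaussian correction term is genuinely present.

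It then remains only to substitute $H(X|Y) = h(Q)$ and $V(X|Y) = v(Q)$ into the two bounds of Theorem~\ref{th:gen} and to factor $n\,h(Q)$ out of the leading pair of terms: $n h(Q) + \sqrt{n v(Q)}\,\Phi^{-1}(1-\eps)$ becomes exactly $\xi(n,\eps;Q)\cdot n h(Q)$ with $\xi(n,\eps;Q) = 1 + \frac{1}{\sqrt n}\,\frac{\sqrt{v(Q)}}{h(Q)}\,\Phi^{-1}(1-\eps)$, which yields both~(\ref{eq:xi1}) and the matching achievability bound. There is no serious obstacle here; the one point worth a sentence of care is that the $O(1)$ remainder in Theorem~\ref{th:gen} is allowed to depend on $P_{XY}$, which is harmless since $P_{XY}^Q$ (equivalently $Q$) is held fixed throughout.
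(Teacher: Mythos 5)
Your proposal is correct and follows the same route as the paper: specialize Theorem~\ref{th:gen} to $P_{XY}^Q$, verify $H(X|Y)=h(Q)$ and $V(X|Y)=v(Q)$, and factor out $n\,h(Q)$ to obtain $\xi(n,\eps;Q)$. The only cosmetic difference is that you compute the variance via the two-point formula $Q(1-Q)(a-b)^2$ rather than expanding the full sum as the paper does, which is a harmless simplification.
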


The proof of Eq.~\eqref{eq:xi1} follows by specializing Theorem~\ref{th:gen} to the distribution $P_{XY}^Q$.

Moreover, numerical simulations reveal that the approximation in Corollary~\ref{th:xi} is very accurate even for small values of $n$. More precisely, we find the following exact bound:
\begin{align}
	\log |\cM| &\geq n h(Q) + \bigg( n (1-Q) - F^{-1}\Big( \eps \big( 1 + 1/\sqrt{n} \big); n, 1-Q \Big) - 1\bigg) \log \frac{1-Q}{Q}  \nonumber\\
	&\quad - \frac12 \log n - \log \frac{1}{\eps} \,,
	\label{eq:exact}
\end{align}
where $F^{-1}(\,\cdot\,; n, p)$ is the inverse of the cumulative distribution function of the binomial distribution. This bound can be evaluated numerically even for reasonably large~$n$.

\subsection{Continuous Variable QKD}

The second joint distribution of interest is the binary-Gaussian distribution:
\begin{equation}
	f_{XY}(x,y)=\frac{1}{2\sqrt{2\pi\sigma^2}}\exp\left(-\frac{(x-y)^2}{2\sigma^2}\right)\label{eq:biawgn}
\end{equation}
where $x\in\{-1,1\}$ and $y\in\mathbb R$. 

In the absence of an active eavesdropper, this distribution arises in continuous variable QKD (CVQKD) with binary modulations \cite{leverrier2009unconditional,leverrier2011continuous} and can be induced in the classical postprocessing of CVQKD with Gaussian modulation \cite{leverrier2008multidimensional,jouguet2011long}. 
For this distribution, both the conditional entropy and the conditional entropy variance do not have known closed form formulas. Abusing notation we denote them again by $h(\sigma)$ and $v(\sigma)$ respectively.
The conditional entropy is known to be \cite{leverrier2009theoretical}:
\begin{equation}
	h(\sigma) = \int_{-\infty}^\infty \phi_\sigma(y)\log(\phi_\sigma(y))dy+\frac{1}{2} \log (8\pi e \sigma^2)\label{eq:condentropybiawgn} 
\end{equation}
where
\begin{equation*}
	\phi_\sigma(y) = \frac{1}{\sqrt{8\pi\sigma^2}}\left(e^{-\frac{(y+1)^2}{2\sigma^2}}+e^{-\frac{(y-1)^2}{2\sigma^2}}\right)
\end{equation*}
The conditional entropy variance is easily found by applying Eq.~\eqref{eq:condentvar}
\begin{equation}
	v(\sigma) = e(\sigma)  - h(\sigma)^2\label{eq:varcevsigma}
\end{equation}
where
\begin{equation*}
	e(\sigma)=2 \int_{-\infty}^\infty f_{XY}(1,y)\left(\log\left(\frac{f_{XY}(1,y)}{f_{XY}(1,y)+f_{XY}(-1,y)}\right)\right)^2
\end{equation*}
These two integral forms can be solved numerically.

For this distribution, Theorem~\ref{th:gen} yields the following bound:\footnote{We here apply Theorem 1 to distributions that are continuous in $Y$. Note that the proofs leading to Theorem 1 can easily be generalized to this setting.}
\begin{corollary}
\label{th:xi2}
Let $0<\eps<1$ and let $\sigma>0$. Then, for large $n$, any $\eps$-correct IR protocol satisfies
\begin{equation}
	\log|M|\geq \xi(n,\eps;\sigma)\cdot n h(\sigma)-\frac{1}{2}\log n - O(1), \label{eq:xi2} 
\end{equation}
where 
\begin{equation*}
	\xi(n, \eps; \sigma) := 1 + \frac{1}{\sqrt{n}} \, \frac{\sqrt{ v(\sigma) } }{ h(\sigma) } \Phi^{-1}(1\!-\!\eps) .
\end{equation*}
Furthermore, there exists an $\eps$-correct IR protocol with $\log |\cM| \leq \xi(n,\eps; \sigma)\cdot n h(\sigma) + \frac12 \log n + O(1)$.
\end{corollary}

\section{Notation and Definitions}

For a finite alphabet $\cX$, we use $\cP(\cX)$ to denote the set of probability distributions on $\cX$. When $\cX$ is the real line $\cP(\cX)$ denotes the set of distributions on the Borel sets of the reals. A channel is a probabilistic kernel $W: \cX \to \cP(\cY)$ and we use $PW \in \cP(\cY)$ to denote the output distribution resulting from applying $W$ to $P \in \cP(\cX)$.
We employ the $\eps$-hypothesis testing divergence as defined in~\cite{dupuis12,tomamicheltan12}. Let $\eps \in (0,1)$ and let $P, Q \in \cP(\cZ)$. We consider binary (probabilistic) hypothesis tests $\xi : \cZ \to [0,1]$ and define the \emph{$\eps$-hypothesis testing divergence}
\begin{align*}
	D_h^{\eps}(P \| Q) := \sup \Big\{ R \in \mathbb{R} \,\Big|\, \exists\ \xi :\ \Exp_{Q} \big[\xi(Z)\big] \leq (1-\eps) e^{-R}\ \land\ \Exp_{P} \big[\xi(Z)\big] \geq 1-\eps  \Big\}.
\end{align*}

Note that $D_h^{\eps}(P \| Q)=-\log \frac{\beta_{1-\eps}(P,Q)}{1-\eps}$  where $\beta_\alpha$ is defined in Polyanskiy \emph{et al.}~\cite{polyanskiy10}. It satisfies a data-processing inequality~\cite{wang09}
\begin{align*}
	D_h^{\eps}(P \| Q) \geq D_h^{\eps}(PW \| QW) 
\end{align*}
for all channels $W$ from $\cX$ to $\cY$.

The following quantity, which characterizes the distribution of the log-likelihood ratio and is known as the \emph{divergence spectrum}~\cite{han02}, is sometimes easier to manipulate and evaluate.
\begin{align*}
	D_s^{\eps}(P \| Q) := \sup \bigg\{ R \in \mathbb{R} \,\bigg|\, \Pr_{P} \Big[ \log \frac{P}{Q} \leq R \Big] \leq \eps \bigg\} .
\end{align*}

It is intimately related to the $\eps$-hypothesis testing divergence.
For any $\delta \in (0, 1-\eps)$, we have~\cite{tomamicheltan12,tomamichel12}
\begin{align}
	\label{eq:Ds-Dh}
	D_s^{\eps}(P \| Q) - \log \frac{1}{1-\eps} \leq D_h^{\eps}(P \| Q) \leq D_s^{\eps+\delta}(P \| Q) + \log \frac{1-\eps}{\delta} .
\end{align}

For a joint probability distribution $P_{XY} \in \cP(\cX \times \cY)$, we define the Shannon conditional entropy
\begin{align*}
	H(X|Y)_P := \Exp\Big[- \log \frac{P_{XY}(X, Y)}{P_{Y}(Y)} \Big] = \sum_{x \in \cX \atop y \in \cY} P_{XY}(x,y) \left(-\log \frac{ P_{XY}(x,y) }{ P_{Y}(y) }\right) \,.
\end{align*}
and its information variance
\begin{align*}
	V(X|Y)_P :&\!\!= \Var \Big[ -\log \frac{P_{XY}(X,Y)}{P_{Y}(Y)} \Big] \nonumber\\
		&\!\!= \sum_{x \in \cX \atop y \in \cY} P_{XY}(x,y) \bigg( -\log \frac{ P_{XY}(x,y) }{ P_{Y}(y) } - H(X|Y)_P \bigg)^2.
\end{align*}
We also employ the min-entropy, which is defined as
\begin{align*}
	H_{\min}(X|Y)_P := - \log p_{\textrm{guess}}(X|Y)_P, 
\end{align*}
where $p_{\textrm{guess}}(X|Y)_P := \sum_{y \in \cY} \max_{x \in \cX} P_{XY}(x,y)$.

\section{Proofs}

\subsection{One-Shot Converse Bound for General Codes}

A general (probabilistic) one-way IR code for a finite alphabet $\cX$ is a tuple $\{ \cM, e, d \}$ consisting of a set of syndromes, $\cM$, an encoding channel $e: \cX \to \cP(\cM)$, and a decoding channel $d: \cY \times \cM \to \cP(\cX)$. We say that a code is $\eps$-correct on a joint distribution $P_{XY} \in \cP(\cX \times \cY)$ if $$\Pr_{P_{XY}}\big[X = d(Y, e(X))\big] \geq 1 - \eps .$$
The converse for probabilistic protocols clearly implies the converse for protocols where the encoder and decoder are deterministic as a special case.

We show the following one-shot lower bound on the size of the syndrome.
\begin{proposition}
\label{pr:one-shot-conv}
Any $\eps$-correct one-way IR code for $P_{XY}$ satisfies,
\begin{align*}
	\log |\cM| \geq H_{\min}(X|Y)_{Q} - D_s^{\eps+\delta}\big( P_{XY} \big\| Q_{XY} \big) \,+\, \log \delta,
\end{align*}
for any $\delta \in (0, 1-\eps)$ and any $Q_{XY} \in \cP(\cX \times \cY)$.
\end{proposition}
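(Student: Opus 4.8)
The plan is to run the \emph{meta-converse} argument of Polyanskiy~\emph{et al.}~\cite{polyanskiy10}, transported to source coding with side information. Fix an $\eps$-correct one-way IR code $\{\cM, e, d\}$ for $P_{XY}$ and an arbitrary auxiliary $Q_{XY} \in \cP(\cX \times \cY)$; write $P = P_{XY}$ and $Q = Q_{XY}$. I would attach to the code its \emph{success test} $\xi : \cX \times \cY \to [0,1]$, where $\xi(x,y)$ is the probability, over the internal randomness of the encoder and decoder, that $d(y, e(x)) = x$. Since the one-way decoder only ever sees $y$ and the syndrome $e(x)$, this $\xi$ is a legitimate (probabilistic) binary hypothesis test on $\cX \times \cY$, and $\eps$-correctness gives $\Exp_{P}[\xi] = \Pr_{P}\big[X = d(Y, e(X))\big] \geq 1-\eps$, so $\xi$ is feasible for the $P$-side constraint in the definition of $D_h^{\eps}(P \| Q)$.

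The heart of the proof is to show that the \emph{same} code succeeds only with small probability under $Q$, namely $\Exp_{Q}[\xi] \leq |\cM|\, p_{\textrm{guess}}(X|Y)_{Q} = |\cM|\, \exp\big(-H_{\min}(X|Y)_{Q}\big)$. Writing $e(m \mid x)$ and $d(x' \mid y, m)$ for the transition probabilities of encoder and decoder, I would expand
\[
  \Exp_{Q}[\xi] \;=\; \sum_{y}\, \sum_{m \in \cM}\, \sum_{x} Q(x,y)\, e(m \mid x)\, d(x \mid y, m),
\]
and for each fixed pair $(y,m)$ bound the inner sum over $x$ by $\big(\max_{x'} Q(x',y)\big) \sum_{x} d(x \mid y, m) = \max_{x'} Q(x',y)$, using $e(m \mid x) \leq 1$ and $\sum_{x} d(x \mid y, m) = 1$. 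The sum over $m \in \cM$ then contributes a factor $|\cM|$ and the sum over $y$ produces $p_{\textrm{guess}}(X|Y)_{Q}$. Morally this is just the observation that a syndrome ranging over $|\cM|$ values lets the decoder point at no more than $|\cM|$ candidates, the best of which is correct with probability $p_{\textrm{guess}}$; everything else is bookkeeping.

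To finish, I would combine the two facts through the hypothesis-testing divergence. Since $\xi$ meets the $P$-constraint and has $Q$-value at most $\Exp_{Q}[\xi]$, taking this value in the supremum defining $D_h^{\eps}$ shows $D_h^{\eps}(P \| Q) \geq \log\frac{1-\eps}{\Exp_{Q}[\xi]}$, i.e.\ $\Exp_{Q}[\xi] \geq (1-\eps)\exp\big(-D_h^{\eps}(P \| Q)\big)$. Plugging this into the counting bound and taking logarithms yields $\log|\cM| \geq H_{\min}(X|Y)_{Q} + \log(1-\eps) - D_h^{\eps}(P\|Q)$. Applying the right-hand inequality of~\eqref{eq:Ds-Dh}, $D_h^{\eps}(P\|Q) \leq D_s^{\eps+\delta}(P\|Q) + \log\frac{1-\eps}{\delta}$ for any $\delta \in (0,1-\eps)$, the two $\log(1-\eps)$ terms cancel and one lands exactly on $\log|\cM| \geq H_{\min}(X|Y)_{Q} - D_s^{\eps+\delta}(P\|Q) + \log\delta$, which is Proposition~\ref{pr:one-shot-conv}.

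I do not anticipate a genuine obstacle; the only point demanding a little care is the treatment of \emph{probabilistic} encoders and decoders. One must define the success test as the average of the correctness indicator over the (independent) internal randomness and verify that the counting bound survives this averaging — which it does, because the displayed expansion above never assumes $e$ or $d$ deterministic and the estimate is carried out kernel-entry by kernel-entry. Degenerate cases ($\Exp_{Q}[\xi]=0$, or $D_s^{\eps+\delta}$ infinite) only make the claimed inequality vacuous, and one should keep $\log$ and $\exp$ to a common base so that the passage through~\eqref{eq:Ds-Dh} is literal. This one-shot bound is then the converse ingredient that, applied to $(P_{XY})^{\times n}$ with a product $Q_{XY}$ and expanded via a Berry–Esseen estimate, gives the converse half of Theorem~\ref{th:gen}.
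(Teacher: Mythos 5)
Your proof is correct and follows essentially the same route as the paper: the code's success indicator as a binary hypothesis test with $\Exp_P[\xi]\ge 1-\eps$, the counting bound $\Exp_Q[\xi]\le|\cM|\,p_{\textrm{guess}}(X|Y)_Q$, and the passage from $D_h^{\eps}$ to $D_s^{\eps+\delta}$ via Eq.~\eqref{eq:Ds-Dh}. The only cosmetic difference is that you define the test directly on $(X,Y)$ by averaging over the code's internal randomness, whereas the paper tests $1\{X=\hat{X}\}$ on $(X,\hat{X})$ and then invokes the data-processing inequality for $D_h^{\eps}$\,---\,the same step written in two ways.
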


\begin{proof}
Let $P_{XYM\hat{X}}$ be the distribution induced by $P_{XY}$, $M \leftarrow e(X)$ and $\hat{X} \leftarrow d(Y, M)$. Analogously, $Q_{XYM\hat{X}}$ is induced by $Q_{XY} \in \cP(\cX \times \cY)$, which we fix for the remainder. We then consider the hypothesis test $\xi(X,\hat{X}) = 1\{X = \hat{X}\}$ between $P_{X\hat{X}}$ and $Q_{X\hat{X}}$. We find
\begin{align*}
	\Exp_P [\xi(X,\hat{X})] = \Pr_{P} [ X = \hat{X} ] \geq 1 - \eps
\end{align*}
and
\begin{align*}    
	\Exp_Q[\xi(X, \hat{X})] = \Pr_{Q} [ X = \hat{X} ] \leq |\cM|\, p_{\textrm{guess}}(X|Y)_{Q} .
\end{align*}

The first inequality holds by assumption that the code is $\eps$-correct. The second inequality follows from the fact that $\Pr [ X = \hat{X}] \leq p_{\textrm{guess}}(X | YM) \leq p_{\textrm{guess}}(X | Y)\, |\cM|$. 

By definition of the $\eps$-divergence and the min-entropy, we thus have
\begin{align}
	D_h^{\eps}(P_{X\hat{X}} \| Q_{X\hat{X}}) \geq H_{\min}(X|Y)_{Q}  - \log |\cM| + \log (1-\eps).
	\label{eq:p1}
\end{align}

Furthermore, Eq.~\eqref{eq:Ds-Dh} and the data-processing inequality with $d$ and $e$ yields
\begin{align*}
	D_s^{\eps+\delta}(P_{XY} \| Q_{XY}) + \log \frac{1-\eps}{\delta} &\geq D_h^{\eps}(P_{XY} \| Q_{XY}) \\
	&\geq D_h^{\eps}(P_{XYM} \| Q_{XYM}) \\
	&\geq D_h^{\eps}(P_{X\hat{X}} \| Q_{X\hat{X}}) .
\end{align*}  

Finally, the statement follows by substituting Eq.~\eqref{eq:p1} and solving for $\log |\cM|$.
\end{proof}

In the i.i.d.\ setting, it is sufficient to consider distributions of the form $Q_{XY} = U_X \times P_{Y}$, where $U_X$ is the uniform distribution on $\cX$. The bound in Prop.~\ref{pr:one-shot-conv} then simplifies to
\begin{align}
	\log |\cM| \geq \log |\cX| - D_s^{\eps+\delta}\big( P_{XY} \big\| U_X \times P_{Y} \big) \,+\, \log \delta .
	\label{eq:conv-iid}
\end{align}

However, it is unclear whether choices of $Q_{XY}$ that contain correlations between $X$ and $Y$ or are not uniform on $X$ are useful to derive tight bounds in the finite block length regime.

\subsection{Proof of Theorem~\ref{th:gen}}

The problem of information reconciliation, or source compression with side information has been studied by many authors in classical information theory. Recent work by Hayashi~\cite{hayashi08} as well as Tan and Kosut~\cite{tan12} considers the normal approximation of this problem. Here, in analogy with~\cite{tomamicheltan12}, we go one step further and also look at the logarithmic third order term.

We consider the direct and converse parts of the theorem separately.
Theorem~\ref{th:gen} then follows as an immediate corollary.
We prove slightly more precise converse and direct theorems by considering the special case where the information variance vanishes separately. Note that the bounds are tight in third order for this special case, whereas otherwise a gap of $\log n$ remains.

\begin{theorem}[Converse for IR]
\label{pr:gen}
Let $0 < \eps < 1$ and let $P_{XY}$ be a probability distribution. Any $\eps$-correct one-way IR protocol on $P_{XY}$  satisfies the following bounds:
\begin{itemize}
\item If $V(X|Y)_P > 0$, we have
\begin{align*}
	\log |\cM| \geq n H(X|Y)_{P} + \sqrt{n V(X|Y)_P}\, \Phi^{-1}(1-\eps) - \frac{1}{2} \log n - O(1),
\end{align*}
\item If $V(X|Y)_P = 0$, we have $\log |\cM| \geq n H(X|Y)_{P} + \log (1-\eps)$.
\end{itemize}
\end{theorem}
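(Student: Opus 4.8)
The plan is to apply the one-shot converse bound from Proposition~\ref{pr:one-shot-conv} to the $n$-fold distribution $(P_{XY})^{\times n}$ and then insert the asymptotic expansion of each term. Concretely, I would set $Q_{XY} = U_X \times P_Y$, so that by the i.i.d.\ specialization in Eq.~\eqref{eq:conv-iid}, any $\eps$-correct IR protocol satisfies
\begin{align*}
  \log |\cM| \geq n \log |\cX| - D_s^{\eps+\delta}\big( (P_{XY})^{\times n} \,\big\|\, (U_X \times P_Y)^{\times n} \big) + \log \delta
\end{align*}
for every $\delta \in (0, 1-\eps)$. The log-likelihood ratio here is $\log \frac{P_{XY}^{\times n}}{(U_X \times P_Y)^{\times n}} = \sum_{i=1}^n \log \frac{P_{XY}(x_i,y_i)}{|\cX|^{-1} P_Y(y_i)} = n\log|\cX| + \sum_{i=1}^n \log \frac{P_{XY}(x_i,y_i)}{P_Y(y_i)}$, which is a sum of $n$ i.i.d.\ random variables with mean $\log|\cX| - H(X|Y)_P$ and variance $V(X|Y)_P$.

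The next step is to expand $D_s^{\eps+\delta}$. For the case $V(X|Y)_P > 0$, I would invoke the Berry–Esseen theorem: since $D_s^{\alpha}(P^{\times n}\|Q^{\times n})$ is, by definition, essentially the $\alpha$-quantile of that i.i.d.\ sum, Berry–Esseen gives
\begin{align*}
  D_s^{\eps+\delta}\big( (P_{XY})^{\times n} \,\big\|\, (U_X \times P_Y)^{\times n} \big) = n\big(\log|\cX| - H(X|Y)_P\big) + \sqrt{n V(X|Y)_P}\,\Phi^{-1}(\eps+\delta) + O(1),
\end{align*}
where the $O(1)$ controls both the Berry–Esseen remainder and the discreteness of the lattice-valued sum. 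Substituting into the converse bound, the $n\log|\cX|$ terms cancel and one obtains $\log|\cM| \geq n H(X|Y)_P - \sqrt{n V(X|Y)_P}\,\Phi^{-1}(\eps+\delta) - O(1)$. Finally I would choose $\delta = 1/\sqrt{n}$ and use $\Phi^{-1}(\eps + 1/\sqrt n) = \Phi^{-1}(\eps) + O(1/\sqrt n)$ together with $-\Phi^{-1}(\eps) = \Phi^{-1}(1-\eps)$; the correction is absorbed and the $\log\delta = -\tfrac12\log n$ term produces exactly the claimed $-\tfrac12\log n$. For the degenerate case $V(X|Y)_P = 0$, the log-likelihood ratio is almost surely constant, so $D_s^{\eps}\big((P_{XY})^{\times n}\|(U_X\times P_Y)^{\times n}\big) = n(\log|\cX| - H(X|Y)_P)$ exactly (no $\delta$ shift needed, since the quantile is flat), and plugging directly into Eq.~\eqref{eq:p1}-style reasoning — more precisely returning to Proposition~\ref{pr:one-shot-conv} with $\delta \to 1-\eps$ handled via $D_h^\eps$ directly rather than through $D_s$ — yields $\log|\cM| \geq n H(X|Y)_P + \log(1-\eps)$.

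The main obstacle is getting the third-order term to come out as exactly $-\tfrac12\log n$ rather than something larger. The naive route through $\delta$ in Eq.~\eqref{eq:Ds-Dh} contributes a $\log(1/\delta)$ loss, and if $\delta$ is a constant this is only an $O(1)$ correction — but then the shift from $\eps$ to $\eps+\delta$ inside $\Phi^{-1}$ is a constant, changing the second-order coefficient, which is not allowed. So $\delta$ must shrink with $n$, and $\delta = 1/\sqrt n$ is the sweet spot: it makes the $\Phi^{-1}$ shift asymptotically negligible at the right scale while costing only $\log(1/\delta) = \tfrac12 \log n$. One must check carefully that the $O(1/\sqrt n)$ perturbation of $\Phi^{-1}$, multiplied by the $\sqrt{n V(X|Y)_P}$ prefactor, is genuinely $O(1)$ and not $O(\sqrt n)$ — this uses local Lipschitz continuity of $\Phi^{-1}$ away from $0$ and $1$, which is fine since $0 < \eps < 1$ is fixed. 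A secondary technical point is that $\log\frac{P_{XY}}{P_Y}$ takes finitely many values, hence its $n$-fold sum is lattice-distributed; one should either appeal to a lattice version of Berry–Esseen or simply absorb the $O(1)$ lattice spacing into the error term, which is harmless here since we only claim accuracy up to $O(1)$.
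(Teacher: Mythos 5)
Your proposal is correct and follows essentially the same route as the paper: it applies Proposition~\ref{pr:one-shot-conv} with $Q_{XY}=U_X\times P_Y$ via Eq.~\eqref{eq:conv-iid}, expands $D_s^{\eps+\delta}$ of the i.i.d.\ log-likelihood sum with the Berry--Esseen theorem, picks $\delta=1/\sqrt{n}$ so that $\log\delta=-\tfrac12\log n$, and absorbs the $O(1/\sqrt n)$ shift of $\Phi^{-1}$ into the $O(1)$ term using its local differentiability. The only cosmetic difference is in the degenerate case $V(X|Y)_P=0$, where you evaluate the divergence of the almost-surely-constant likelihood ratio directly while the paper takes $\delta=1-\eps-\xi$ and lets $\xi\to 0$; both yield the same $\log(1-\eps)$ term.
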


\begin{proof}
We consider an i.i.d.\ distribution $(P_{XY})^{\times n}$ and use Prop.~\ref{pr:one-shot-conv}, more precisely Eq.~\eqref{eq:conv-iid}, to get
\begin{align}
	\log |\cM| 
	&\geq n \log |\cX| -  D_s^{\eps+\delta} \big( (P_{XY})^{\times n} \big\| (U_{X} \times P_{Y})^{\times n} \big) + \log \delta \nonumber\\ 
	&= - n \sup \Bigg\{ R \in \mathbb{R} \,\Bigg|\, \Pr \Bigg[ \frac{1}{n} \sum_{i=1}^n \log \frac{P_{XY}(X_i, Y_i)}{P_{Y}(Y_i)} \leq R \Bigg] \leq \eps + \delta \Bigg\} + \log \delta \,
	\label{eq:start}
\end{align}
for any $0 < \delta < 1-\eps$. Note that we pulled $\log |\cX|$ into the information spectrum to find~\eqref{eq:start}. Next, observe that the random variables $Z_i = \log \frac{P_{XY}(X_i, Y_i)}{P_{Y}(Y_i)}$ follow an i.i.d.\ distribution, and satisfy $\Exp[Z_i] = -H(X|Y)_P$ and $\Var[Z_i] = V(X|Y)_P$.
Let us first consider the special case where $V(X|Y)_P = 0$. This implies directly that $Z_i = -H(X|Y)_P$ with probability $1$. Thus,
\begin{align*}
	\Pr \bigg[ \frac{1}{n} \sum_{i=1}^n Z_i \leq R \bigg] = \begin{cases} 0 & \textrm{if}\ R < -H(X|Y)_P \\ 1 & \textrm{if}\ R \geq -H(X|Y)_P \end{cases} \, .
\end{align*}

Hence, for any $\xi > 0$ and $\delta = 1 - \eps - \xi$, we find $\log |\cM| \geq nH(X|Y)_P + \log (1- \eps - \xi)$, proving the result in the limit $\xi \to 0$.

In the following, we may therefore assume that $V(X|Y)_P > 0$, which allows for a simple application of the Berry-Esseen theorem, which states that
\begin{align*}
	\forall R \in \mathbb{R}:\left| \Pr \bigg[ \frac{1}{n} \sum_{i=1}^n Z_i \leq R \bigg] - \Phi\left( \sqrt{n}\, \frac{ R + H(X|Y)_P }{\sqrt{V(X|Y)_P}} \right) \right| \leq 
	\frac{B}{\sqrt{n}} \,, 
\end{align*}
where 
\begin{equation*}
B:=B_0 \frac{T(X|Y)_P}{\big( \sqrt{V(X|Y)_P} \big)^{3}}
\end{equation*}
and $B_0 \leq \frac12$ is a the Berry-Esseen constant~\cite{tyurin10} and $T(X|Y)_P := \Exp \Big[ \big| \log \frac{P_Y}{P_{XY}} - H(X|Y)_P \big|^3 \Big] < \infty$ is the third moment of the information spectrum. Since $0 < B < \infty$ is finite, we find
\begin{align*}
	\log |\cM| &\geq - n \sup \Bigg\{ R \in \mathbb{R} \,\Bigg|\, \Phi\left( \sqrt{n}\, \frac{ R + H(X|Y)_P }{\sqrt{V(X|Y)_P}} \right) \leq \eps + \frac{B+1}{\sqrt{n}} \Bigg\} - \frac{1}{2} \log n \\
	&= n H(X|Y)_P \nonumber\\
	&- \sqrt{n V(X|Y)_P}\cdot \sup \bigg\{ r \in \mathbb{R} \,\bigg|\, \Phi(r) \leq \eps + \frac{B+1}{\sqrt{n}} \bigg\} - \frac12 \log n \\
	&= n H(X|Y)_P - \sqrt{n V(X|Y)_P}\ \Phi^{-1}\Big(\eps + \frac{B+1}{\sqrt{n}}\Big) - \frac{1}{2} \log n \,.
\end{align*}

Here, we chose $\delta = 1/\sqrt{n}$, implicitly assuming that $n > (B+1)^2(1 - \eps)^{-2}$ is sufficiently large. Since $\Phi^{-1}$ is continuously differentiable except at the boundaries, there exists a constant $\gamma$ such that
\begin{align*}
	\Phi^{-1}\Big(\eps + \frac{B+1}{\sqrt{n}}\Big) \leq \Phi^{-1}(\eps) + \gamma\, \frac{B+1}{\sqrt{n}} .
\end{align*}

Since $V(X|Y)_P < \infty$, this then leads to the desired bound
\begin{align}
	\log |\cM| &\geq n H(X|Y)_P - \sqrt{n V(X|Y)_P}\ \Phi^{-1}(\eps) - \frac12 \log n \nonumber\\
	&\qquad\qquad\qquad\qquad\qquad\qquad - \gamma \bigg( B_0 \frac{T(X|Y)_P}{V(X|Y)_P} + \sqrt{V(X|Y)_P} \bigg) \label{eq:thisone}.
\end{align}
\end{proof}

The constant term in~\eqref{eq:thisone} can be simplified when $\eps < \frac{1}{2}$ and $n > (B+1)^2(\frac{1}{2} - \eps)^{-2}$. We get
\begin{align*}
	\log |\cM| &\geq n H(X|Y)_P - \sqrt{n V(X|Y)_P}\ \Phi^{-1}(\eps) - \frac12 \log n \\
	&- \frac{1}{\varphi(\Phi^{-1}(\eps))} \cdot \frac{3 T(X|Y)_P}{2 V(X|Y)_P} ,
\end{align*}
where we used that $B_0 \leq \frac{1}{2}$ and $\big(\sqrt{V(X|Y)_P}\big)^3 \leq T(X|Y)_P$. Moreover, we note that the choice $\gamma = \frac{\diff \Phi^{-1}}{\diff \eps} \big|_{\eps} = \frac{1}{\varphi(\Phi^{-1}(\eps))}$ is sufficient (and also necessary for large $n$) due to concavity of $\Phi^{-1}$ on $(0, \frac{1}{2})$. Here, $\varphi(x) = \frac{\diff \Phi}{\diff x} \big|_x = \frac{1}{\sqrt{2\pi}} \exp \big(-x^2/2 \big)$ denotes the probability density function of the standard normal distribution. The constant term behaves very badly for small $\eps$, e.g., we find
\begin{align*}
	\frac{1}{\varphi \big(\Phi^{-1}\big( 10^{-4} \big) \big)} \approx 2.5 \cdot 10^3
\end{align*}
for a typical value of $\eps$. Nonetheless, the normal approximation in Theorem~\ref{pr:gen} is often very accurate.

\begin{theorem}[Achievability for IR]
Let $0 < \eps < 1$ and let $P_{XY}$ be a probability distribution. There exists an $\eps$-correct one-way IR protocol with the following property:
\begin{itemize}
\item If $V(X|Y)_P > 0$, we have
\begin{align*}
	\log |\cM| \leq n H(X|Y)_P + \sqrt{n V(X|Y)_P}\, \Phi^{-1}(1-\eps) + \frac12 \log n + O(1) .
\end{align*}
\item If $V(X|Y)_P = 0$, we have $\log |\cM| \leq n H(X|Y)_{P} - \log \eps$.
\end{itemize}
\end{theorem}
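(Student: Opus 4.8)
The plan is to prove the matching direct part from a one-shot random-binning bound, applied to $(P_{XY})^{\times n}$, in a way that mirrors the converse of Proposition~\ref{pr:one-shot-conv} read in reverse. A one-shot achievability bound of the type derived by Yassaee \emph{et al.}~\cite{yassaee13}\,---\,assign every $x^n$ to one of $|\cM|$ bins uniformly and independently, transmit the bin index $M$, and let Bob output the unique element of the announced bin lying in the conditionally-likely set $\mathcal{T}_{y^n}=\{x^n:P_{X^n|Y^n}(x^n|y^n)\ge 2^{-R}\}$\,---\,gives, after averaging over binnings and keeping the best one, a one-way IR code with
\begin{align*}
  \Pr[X^n\neq\hat X^n]\;\le\;\Pr_{(P_{XY})^{\times n}}\!\Big[\log\tfrac{P_{Y^n}(Y^n)}{P_{X^nY^n}(X^n,Y^n)}>R\Big]\;+\;\frac{2^R}{|\cM|},
\end{align*}
where the collision term uses $|\mathcal{T}_{y^n}|\le 2^R$ (Markov's inequality). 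In the notation of the converse this says that there is an $\eps$-correct code with $\log|\cM|\le n\log|\cX|-D_s^{\eps-\delta}\big((P_{XY})^{\times n}\,\big\|\,(U_X\times P_Y)^{\times n}\big)+\log(1/\delta)+O(1)$, i.e.\ Eq.~\eqref{eq:conv-iid} with the inequality flipped.

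First I would treat the generic case $V(X|Y)_P>0$. Write $S_n=\log\frac{P_{Y^n}(Y^n)}{P_{X^nY^n}(X^n,Y^n)}=\sum_{i=1}^n Z_i$ with the $Z_i$ i.i.d., $\Exp[Z_i]=H(X|Y)_P$, $\Var[Z_i]=V(X|Y)_P$, and finite third absolute moment $T(X|Y)_P$. I would pick the threshold
\begin{align*}
  R:=nH(X|Y)_P+\sqrt{nV(X|Y)_P}\;\Phi^{-1}\!\Big(1-\eps+\tfrac{B+1}{\sqrt n}\Big),
\end{align*}
with $B$ the Berry--Esseen error term from the proof of Theorem~\ref{pr:gen}, $B=B_0\,T(X|Y)_P\big/\big(\sqrt{V(X|Y)_P}\big)^{3}$, and set $|\cM|=\lceil 2^R\sqrt n\,\rceil$ so that $2^R/|\cM|\le 1/\sqrt n$. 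The Berry--Esseen theorem then bounds $\Pr[S_n>R]\le 1-\Phi\big(\Phi^{-1}(1-\eps+\tfrac{B+1}{\sqrt n})\big)+\tfrac{B}{\sqrt n}=\eps-\tfrac1{\sqrt n}$, so the code is $\eps$-correct. Since $\Phi^{-1}$ is smooth on $(0,1)$ we have $\Phi^{-1}(1-\eps+\tfrac{B+1}{\sqrt n})=\Phi^{-1}(1-\eps)+O(1/\sqrt n)$, hence $\sqrt{nV(X|Y)_P}$ times that correction is $O(1)$, and $\log|\cM|=R+\tfrac12\log n+O(1)$ yields exactly the claimed bound (with an $O(1)$ remainder that, as in the converse, scales like $T(X|Y)_P/\big(\sqrt{V(X|Y)_P}\big)^{3}$ and $1/\varphi(\Phi^{-1}(1-\eps))$).

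For the degenerate case $V(X|Y)_P=0$ one has $Z_i=H(X|Y)_P$ almost surely, so $S_n=nH(X|Y)_P$ deterministically; taking $R:=nH(X|Y)_P$ makes $\Pr[S_n>R]=0$ (the true sequence always lies in $\mathcal{T}_{Y^n}$), and choosing $|\cM|$ with $2^R/|\cM|\le\eps$ gives $\log|\cM|\le nH(X|Y)_P-\log\eps$.

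The only genuine subtlety, and the point I would be most careful about, is the error accounting in the $V>0$ case: one must \emph{not} split the budget $\eps$ into fixed fractions between the threshold event $\{S_n>R\}$ and the binning-collision term, since that would replace $\Phi^{-1}(1-\eps)$ by the strictly larger $\Phi^{-1}(1-c\eps)$ in the second-order term. Instead the collision term has to be driven to $o(1)$, which here costs exactly the $+\tfrac12\log n$ in $\log|\cM|$ and is what leaves the $\log n$ gap to the converse's $-\tfrac12\log n$; closing it would require a sharper one-shot bound (e.g.\ a maximum-likelihood rather than threshold decoder, as in the tightest channel-coding results~\cite{polyanskiy10,tomamicheltan12}) and is not pursued here. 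Checking that the $O(1)$ terms are uniform once $n$ exceeds an $\eps$-dependent threshold, and that the ceiling in the choice of $|\cM|$ costs only $O(1)$, is routine.
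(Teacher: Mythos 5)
Your proposal is correct and follows essentially the same route as the paper: the paper invokes the one-shot achievability bound of Yassaee \emph{et al.}~\cite{yassaee13} (in the form $\log|\cM| \le n\log|\cX| - D_s^{\eps-\delta} - \log\delta + 1$) and then notes that the asymptotic expansion is carried out exactly as in the converse, which is precisely your Berry--Esseen argument with $\delta = 1/\sqrt{n}$ absorbing the collision term. Your explicit random-binning derivation of the one-shot bound and the handling of the $V=0$ case simply fill in steps the paper leaves to the cited reference and to analogy with the converse.
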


\begin{proof}
We employ a one-shot achievability bound due to~\cite{yassaee13} (we use the variant in~\cite[Cor.~12]{beigi13b}), which, for every $0 < \delta < \eps$, ensures the existence of an $\eps$-correct protocol with
\begin{align*}
	\log |\cM| \leq n \log |\cX| - D_s^{\eps-\delta}\big( (P_{XY})^{\times n} \,\big\|\, (U_{X} \times P_{Y})^{\times n}\big) - \log \delta + 1 .
\end{align*}

The remaining steps are exactly analogous to the steps taken in the proof of the converse asymptotic expansion, and we omit them here.
\end{proof}

\subsection{Proof of Corollary~\ref{th:xi}}

The corollary is a trivial specialization of Theorem~\ref{th:gen} and it only remains to evaluate $H(X|Y)_P$ and $V(X|Y)_P$ for the distribution in Eq.~\eqref{eq:PXYQ}. We find
\begin{align*}
	H(X|Y)_P &= - \sum_{x,y} P_{XY}(x,y) \log \frac{P_{XY}(x,y)}{P_{Y}(y)} \\
	&= - Q \log Q - (1-Q) \log (1-Q) =: h(Q),
\end{align*}
and
\begin{align*}
	V(X|Y)_P &= \sum_{x, y} P_{XY}(x,y)\bigg( \log \frac{P_{XY}(x,y)}{P_{Y}(y)} + h(Q) \bigg)^2 \\
	&= Q \bigg( (1-Q)\log Q - (1-Q) \log (1-Q) \bigg)^2 \\
	&\qquad+ (1-Q) \bigg( Q \log (1-Q) - Q \log Q \bigg)^2\\
	&= \big( Q (1 - Q)^2 + (1-Q) Q^2 \big) \big( \log Q - \log (1-Q) \big)^2 \\
	&= Q( 1 - Q) \Big( \log \frac{ Q }{ 1-Q} \Big)^2 =: v(Q) .
\end{align*}

\subsection{Exact Converse Bound for $(\eps,Q)$-correct Codes}

Let us state a more precise lower bound on $\log |\cM|$ that is valid for all $n$ and can be evaluated numerically for large $n$. This bound has the advantage that it does not contain unspecified contributions of the form $O(1)$. In particular, it does not suffer from the problem of potentially large constant terms as discussed above.

\begin{proposition}
\label{pr:xi-converse}
Let $0 < \eps < 1$ and let $0 < Q < \frac{1}{2}$. Then, any $(\eps, Q)$-correct one-way error correction code on a block of length $n$ satisfies
\begin{align*}
	\log |\cM| &\geq n h(Q) \\
	&+ \bigg( n (1-Q) - F^{-1}\Big( \eps \big( 1 + 1/\sqrt{n} \big); n, 1-Q \Big) - 1\bigg) \log \frac{1-Q}{Q} \\
	&- \frac12 \log n - \log \frac{1}{\eps} \,,
\end{align*} 
where $F^{-1}(\,\cdot\,; n, p)$ is the inverse of the cumulative distribution function of the binomial distribution, i.e. $F(k; n, p) := \sum_{\ell=0}^k {n \choose \ell} p^{\ell} (1-p)^{n-\ell}$ and $F^{-1}(\eps; n,p) := \max \{ k \in \mathbb{N} \,|\, F(k; n, p) \leq \eps \}$. 
\end{proposition}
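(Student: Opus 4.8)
The plan is to derive Proposition~\ref{pr:xi-converse} from the one-shot converse bound of Proposition~\ref{pr:one-shot-conv} by making an explicit, non-asymptotic choice of the reference distribution $Q_{XY}$ and then evaluating the resulting divergence spectrum exactly. As in the proof of Theorem~\ref{pr:gen}, I would take $Q_{XY} = (U_X \times P_Y)^{\times n}$ with $P_{XY} = (P_{XY}^Q)^{\times n}$, so that, since $|\cX| = 2^n$, Eq.~\eqref{eq:conv-iid} reads
\begin{align*}
  \log |\cM| \geq n - D_s^{\eps + \delta}\big( (P_{XY}^Q)^{\times n} \,\big\|\, (U_X \times P_Y)^{\times n} \big) + \log \delta,
\end{align*}
valid for every $0 < \delta < 1 - \eps$. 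The next step is to identify the log-likelihood ratio explicitly: for the distribution in~\eqref{eq:PxyQ} one has $P_Y = U_Y$ (both marginals are uniform), and $\log \frac{P_{XY}^Q(x_i,y_i)}{(1/2)} = \log(1-Q) + 1$ when $x_i = y_i$ and $\log Q + 1$ when $x_i \neq y_i$. Writing $K$ for the number of agreements $x_i = y_i$ among the $n$ coordinates, the per-block log-likelihood ratio is $n + K\log(1-Q) + (n-K)\log Q = n - n\log\frac1Q + K \log\frac{1-Q}{Q}$, and under $(P_{XY}^Q)^{\times n}$ the count $K$ is distributed as $\mathrm{Binomial}(n, 1-Q)$.

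Given this, the divergence spectrum $D_s^{\eps+\delta}$ becomes a statement about the binomial tail. Specifically, $D_s^{\eps+\delta}$ is the supremum of $R$ such that the probability the log-likelihood ratio is $\le R$ is $\le \eps + \delta$; translating $R$ into a threshold on $K$ via the affine relation above, this is governed by $F(k; n, 1-Q)$, and one obtains an exact expression of the form
\begin{align*}
  n - D_s^{\eps+\delta}\big( \cdots \big) = n h(Q) + \Big( n(1-Q) - F^{-1}(\eps + \delta; n, 1-Q) - 1 \Big)\log\frac{1-Q}{Q} + (\text{small terms}),
\end{align*}
where the $-1$ and the exact bookkeeping come from the strict-vs-nonstrict inequality in the definition of $F^{-1}$ as a maximum and from $n(1 - H) = n\log\frac1Q - n$ versus $nh(Q)$. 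Choosing $\delta = \eps/\sqrt n$ gives the argument $\eps(1 + 1/\sqrt n)$ appearing in the statement, and the $-\log\delta = -\log(\eps/\sqrt n) = \frac12\log n - \log\frac1\eps$ term from Eq.~\eqref{eq:conv-iid}... wait, that has the wrong sign; the bound has $+\log\delta = -\frac12\log n + \log\eps$, so I would instead track signs carefully and note that the claimed $-\frac12\log n - \log\frac1\eps$ must arise by also absorbing a $\log n$-type slack, or by a slightly different split of $\delta$; reconciling these constants precisely is the one place demanding care.

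The main obstacle, then, is purely the exact bookkeeping: getting the floor/ceiling conventions in $F^{-1}$ to line up with the open/closed inequalities in $D_s^\eps$, and tracking the additive constants ($\pm 1$ in the exponent of $\frac{1-Q}{Q}$, the $\log\delta$ term, the choice $\delta = \eps/\sqrt n$) so that the final inequality is genuinely valid for \emph{all} $n$ rather than only asymptotically. There is no deep step here — no Berry--Esseen, no limiting argument — just a careful substitution of the binomial CDF into Eq.~\eqref{eq:conv-iid} and an honest accounting of rounding. I would therefore structure the proof as: (i) invoke Eq.~\eqref{eq:conv-iid} with the uniform reference; (ii) compute the log-likelihood ratio and identify its law as a shifted/scaled binomial; (iii) rewrite $D_s^{\eps+\delta}$ in terms of $F^{-1}(\,\cdot\,; n, 1-Q)$; (iv) set $\delta = \eps/\sqrt n$ and simplify the constants using $n\log\frac1Q - n = nh(Q) + (1-Q)n\log\frac{1-Q}{Q} - \,\ldots$, collecting everything into the stated form.
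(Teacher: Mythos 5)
Your proposal follows the paper's proof essentially verbatim: specialize Eq.~\eqref{eq:conv-iid} to the uniform reference, observe that the log-likelihood ratio is an affine function of a $\mathrm{Binomial}(n,1-Q)$ count of agreements, translate $D_s^{\eps+\delta}$ into $F^{-1}(\eps+\delta;n,1-Q)$ (with the $-1$ arising exactly from the strict/non-strict bookkeeping you identify), and choose $\delta=\eps/\sqrt{n}$. The sign worry you flag at the end is unfounded: Eq.~\eqref{eq:conv-iid} contributes $+\log\delta=\log(\eps/\sqrt{n})=-\tfrac12\log n-\log\tfrac{1}{\eps}$, which is precisely the constant term in the statement, so no additional slack needs to be absorbed.
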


\begin{proof}
We repeat Eq.~\eqref{eq:start}, where we found
\begin{align*}
	\log |\cM| &\geq - \sup \Bigg\{ R \in \mathbb{R} \,\Bigg|\, \Pr \Bigg[ \sum_{i=1}^n \underbrace{ \log \frac{P_{XX'}(X_i, X'_i)}{U_{X'}(X'_i)} }_{ =:\, Z_i } \leq R \Bigg] \leq \eps + \delta \Bigg\} + \log \delta \,.
\end{align*}
for any $0 < \delta < 1-\eps$. Here, we further used that $P_{X'}$ is uniform so that the random variables $Z_i$ are of the simple form
\begin{align*}
	\Pr_P \big[ Z_i = \log Q \big] = Q \quad \textrm{and} \quad \Pr_P \big[ Z_i = \log (1-Q) \big] = 1 - Q \,.
\end{align*}

When $Q \neq \frac{1}{2}$, we can rescale this into a Bernoulli trial: $$B_i = \big( Z_i - \log Q \big) \big( \log \frac{1-Q}{Q} \big)^{-1}\ .$$ Thus, by an appropriate change of variable, we get
\begin{align}
	\log &|M| \geq\nonumber\\ 
	&\geq - \Bigg( n \log Q + \log \frac{1-Q}{Q} \cdot \sup \bigg\{ k \in \mathbb{N} \,\bigg|\, \Pr \bigg[ \sum_{i=1}^n B_i \leq k  \bigg] \leq \eps + \delta \bigg\} \Bigg) + \log \delta \nonumber\\
	&= n h(Q) \nonumber\\
	&+ \bigg( n (1-Q) - \max \Big\{ k \in \mathbb{N}\,\Big|\, F(k -1; n, 1-Q) \leq \eps + \delta \Big\} \bigg) \log \frac{1-Q}{Q} + \log \delta \label{eq:num}\\
	&= n h(Q) + \bigg( \min \Big\{ k \in \mathbb{N}\,\Big|\, F(k; n, Q) \geq 1 - \eps - \delta \Big\} - n Q \bigg) \log \frac{1-Q}{Q} + \log \delta. \nonumber
\end{align}

The remaining optimizations over $k$ and $\delta$ can be done numerically. Alternatively, we are free to choose $\delta = \frac{\eps}{\sqrt{n}}$ in Eq.~\eqref{eq:num} to conclude the proof.
\end{proof}

\subsection{Proof of Corollary \ref{th:xi2}}

In order to prove Corollary \ref{th:xi2}, we just need to evaluate the conditional entropy and entropy variances for the binary-Gaussian distribution Eq.~\eqref{eq:biawgn}. For the sake of completeness, we do the explicit calculations. For the conditional entropy we obtain,
\begin{align}
	H(X|Y)_f &= -\int_{-\infty}^\infty dy \sum_{x\in\{-1,1\}}f_{XY}(x,y)\left(\log\frac{f_{XY}(x,y)}{f_Y(y)}\right)\nonumber\\
	&= -\int_{-\infty}^\infty dy\sum_{x\in\{-1,1\}}f_{XY}(x,y)\left(\log f_{XY}(x,y)\right)\nonumber\\
	&\quad +\int_{-\infty}^\infty dy f_{Y}(y)\log\left(f_Y(y)\right)\label{eq:entbiawgntwoterms}
\end{align}

Let us expand separately the first term in Eq.~\eqref{eq:entbiawgntwoterms}:
\begin{align}
	\int_{-\infty}^\infty &dy\sum_{x\in\{-1,1\}}f_{XY}(x,y)\left(\log f_{XY}(x,y)\right)\nonumber\\
	&= \int_{-\infty}^\infty \sum_{x\in\{-1,1\}}dy\frac{1}{\sqrt{8\pi\sigma^2}}\exp\left(-\frac{(x-y)^2}{2\sigma^2}\right)\left(\log \frac{1}{\sqrt{8\pi\sigma^2}}\exp\left(-\frac{(x-y)^2}{2\sigma^2}\right)\right)\nonumber\\
	&=\int_{-\infty}^\infty \sum_{x\in\{-1,1\}}dy\frac{1}{\sqrt{8\pi\sigma^2}}\exp\left(-\frac{(x-y)^2}{2\sigma^2}\right)\left(-\frac{1}{2}\log 8\pi\sigma^2-\frac{(x-y)^2}{2\sigma^2}\log e\right)\nonumber\\ 
	&=-\frac{1}{2}\log 8\pi\sigma^2-\frac{\log e}{2\sigma^2}\int_{-\infty}^\infty \sum_{x\in\{-1,1\}}dy\frac{1}{\sqrt{8\pi\sigma^2}}\exp\left(-\frac{(x-y)^2}{2\sigma^2}\right)\left(x-y\right)^2\nonumber\\ 
	&=-\frac{1}{2}\log 8\pi\sigma^2-\frac{\log e}{2\sigma^2}\int_{-\infty}^\infty dy\frac{1}{\sqrt{2\pi\sigma^2}}\exp\left(-\frac{y^2}{2\sigma^2}\right)y^2\nonumber\\ 
	&=-\frac{1}{2}\log 8\pi\sigma^2e\label{eq:tmp1}
\end{align}

The marginal on $Y$ can be found to be:
\begin{align}
	f_Y(y)&=\sum_{x\in\{-1,1\}}f_{XY}(x,y)\nonumber\\
	      &=\frac{1}{\sqrt{8\pi\sigma^2}}\left(\exp\left({-\frac{(y+1)^2}{2\sigma^2}}\right)+\exp\left({-\frac{(y-1)^2}{2\sigma^2}}\right)\right)\label{eq:tmp2}
\end{align}

It follows that $H(X|Y)_f=h(\sigma)$ by plugging Eq.~\eqref{eq:tmp1} and Eq.~\eqref{eq:tmp2} back into Eq.~\eqref{eq:entbiawgntwoterms}.

Now let us prove that the conditional entropy variance is given by Eq.~\eqref{eq:varcevsigma}. 
\begin{align}
	V(X|Y)_f :&= \Var \left[-\log \frac{f_{XY}}{f_{Y}}\right]\nonumber\\
	&= \Exp\left[ \left(-\log \frac{f_{XY}}{f_{Y}}\right)^2\right]- \left(\Exp\left[-\log \frac{f_{XY}}{f_{Y}}\right]\right)^2\nonumber\\
	&= \Exp\left[ \left(-\log \frac{f_{XY}}{f_{Y}}\right)^2\right]-(h(\sigma))^2\label{eq:rhsvxy}
\end{align}

We conclude by identifying the first term in the right hand side of Eq.~\eqref{eq:rhsvxy} with $e(\sigma)$:
\begin{align}
	\Exp\left[ \left(-\log \frac{f_{XY}}{f_{Y}}\right)^2\right] &=\int_{-\infty}^\infty dy\sum_{x\in\{-1,1\}}f_{XY}(x,y)\left(-\log\frac{f_{XY}(x,y)}{f_{Y}(y)} \right)^2\nonumber\\
	&=2\int_{-\infty}^\infty dyf_{XY}(1,y)\left(-\log\frac{f_{XY}(1,y)}{f_{Y}(y)} \right)^2\nonumber
\end{align}
where the last equality follows because $f_{XY}(1,y)=f_{XY}(-1,-y)$.

\section{Numerical Results}
\label{sec:results}

As shown above, $\log |\cM| \approx \xi(n, \eps; \,\cdot\,) n h(\,\cdot\,)$ is theoretically achievable for both binary-binary and binary-Gaussian distributions, and optimal up to additive constants. However, this implies that, for instance in the binary-binary case, the approximation $\log |\cM| \approx 1.1 n h(Q)$ is provably too optimistic if $\xi(n, \eps; Q) > 1.1$, e.g.\ for $n \le 10^4$, $Q \ge 2.5\%$, and $\eps = 10^{-2}$. The function $\xi(\,\cdot\,, \eps; Q)$ is plotted in Fig.~\ref{fig:xi} for different values of $\eps$ and $Q$. 

Moreover, theoretical achievability only ensures the existence of an information reconciliation (error correcting) code without actually constructing it. In fact, it is not known if efficient codes used in practical implementations can achieve the above bound. Hence, the approximation given in Corollary~\ref{th:xi} and Corollary~\ref{th:xi2} are generally too optimistic and must be checked against what can be achieved using state-of-the-art codes.

We suggest that practical information reconciliation codes for finite block lengths should be benchmarked against the fundamental limit for that block length, and not against the asymptotic limit. Moreover, we conjecture that, for some constants $\xi_{1}, \xi_{2} \geq 1$ depending only on the coding scheme used, the leaked information due to information reconciliation can be approximated well by
\begin{align}
\label{eq:fit}
	\lEC \approx \xi_1 \cdot n h(Q) + \xi_2 \cdot \sqrt{n v(Q)}\,\Phi^{-1}(1-\eps)
\end{align}
for a large range of $n$ and $Q$ ($\sigma$ for binary-Gaussian distributions) as long as $\eps$ is small enough. Here, $\xi_1$ measures how well the code achieves the asymptotic limit ($1$st order) whereas $\xi_2$ measures the $2$nd order deficiency. 

In the following we test this conjecture against some state-of-the-art error correcting codes (designed for the binary symmetric and additive white Gaussian channels, BSC and AWGN, respectively).
More precisely, we study several scenarios where we fix two of the parameters in \eqref{eq:fit} ---the failure probability $\eps$, the block length $n$, the leakage and the noise parameter--- and explore the tradeoff between the two free parameters. In each secenario, we construct codes that verify the two fixed parameters and fit $\xi_1$ and $\xi_2$ according to \eqref{eq:fit}.
For this numerical analysis we have chosen low-density parity-check (LDPC) codes following several recent implementations \cite{martinez13,pacher12,walenta13}.

We constructed two sets of LDPC codes with the progressive edge algorithm (PEG)~\cite{hu05}. We constructed the first set of codes using the following degree polynomials for the BSC:
\begin{align}
\lambda_1(x) &= 0.1560x + 0.3482x^2 + 0.1594x^{13}  + 0.3364x^{14}\nonumber\\
\lambda_2(x) &= 0.1305x + 0.2892x^2 + 0.1196x^{10} + 0.1837x^{12} + 0.2770x^{14}\nonumber\\
\lambda_3(x) &= 0.1209x + 0.2738x^2 + 0.1151x^5 + 0.2611x^{10} + 0.2291x^{14} \nonumber
\end{align}
where $\lambda_1(x)$, $\lambda_2(x)$ and $\lambda_3(x)$ were designed for coding rates $0.6$, $0.7$ and $0.8$, respectively \cite{chung01}.

And we constructed the second set of codes using these polynomials for the AWGN channel:
\begin{align}
\lambda_4(x) &= 0.16988x + 0.29342x^2 + 0.1633x^6 + 0.15835x^{11} + 0.21505x^{28} \nonumber\\
\lambda_5(x) &= 0.13372x + 0.2689x^2 + 0.00358x^6 + 0.15093x^7 + 0.01572x^8 \nonumber\\
 &+ 0.04647x^9 + 0.0001x^{10} + 0.00228x^{19} + 0.08615x^{24} + 0.02173x^{25} \nonumber\\
 &+ 0.27025x^{27} + 0.00017x^{29} \nonumber\\
\lambda_6(x) &= 0.10462x + 0.31534x^2 + 0.26969x^8 + 0.00933x^{19} + 0.02778x^{21} \nonumber\\
 &+ 0.00803x^{24} + 0.23115x^{26} + 0.03406x^{29} \nonumber
\end{align}
with code rates $0.6$, $0.7$ and $0.8$, for $\lambda_4(x)$, $\lambda_5(x)$ and $\lambda_6(x)$, respectively.

Fig.~\ref{fig:fer} and Fig.~\ref{fig:fer-awgn} show the block error rate as a function of $Q$ (the crossover probability in BSC) and SNR$=1/\sigma^2$ (the signal to noise ratio in the AWGN) for codes with rates $0.6$, $0.7$, $0.8$, and lengths $10^3$, $10^4$. The thick lines connect the simulated points while the dotted lines represent a fit following Eq. \eqref{eq:fit} (the fit values can be found in Table \ref{tab:xi}). The fit perfectly reproduces the so-called waterfall region of the codes. However, Eq. (\ref{eq:fit}) drops sharply with $Q$ for $Q\in[0,0.1]$ and with $\sigma$ for $\sigma\in[0,4]$ while LDPC codes experience an error floor. In this second region the fit can not approximate the behavior of the codes.

\begin{figure}[t]
\centering
\includegraphics[width=0.6\textwidth]{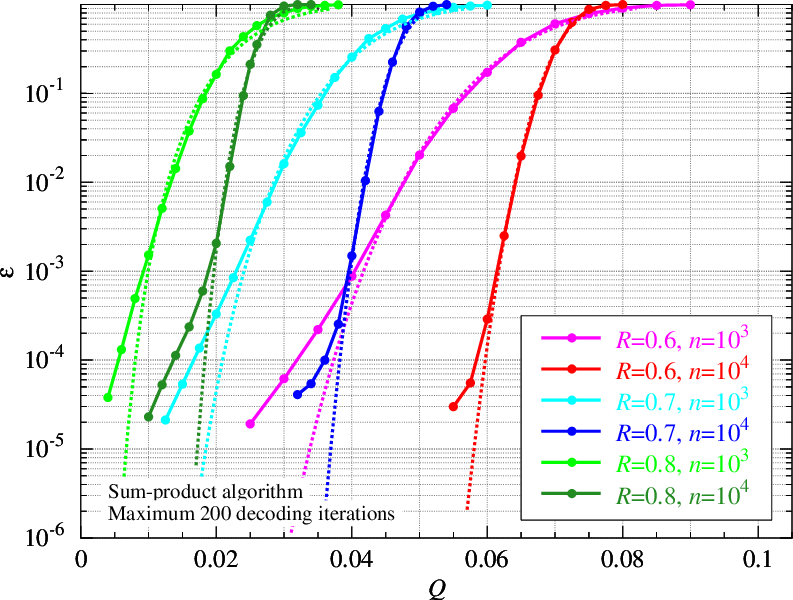}
\caption{Simulated block error rates~$\eps$ of LDPC codes of length $n=10^3$ and $n=10^4$ and code rates $R=0.6$, $R=0.7$ and $R=0.8$ as a function of quantum bit error rate $Q$.}
\label{fig:fer}
\end{figure}

\begin{figure}[t]
\centering
\includegraphics[width=0.6\textwidth]{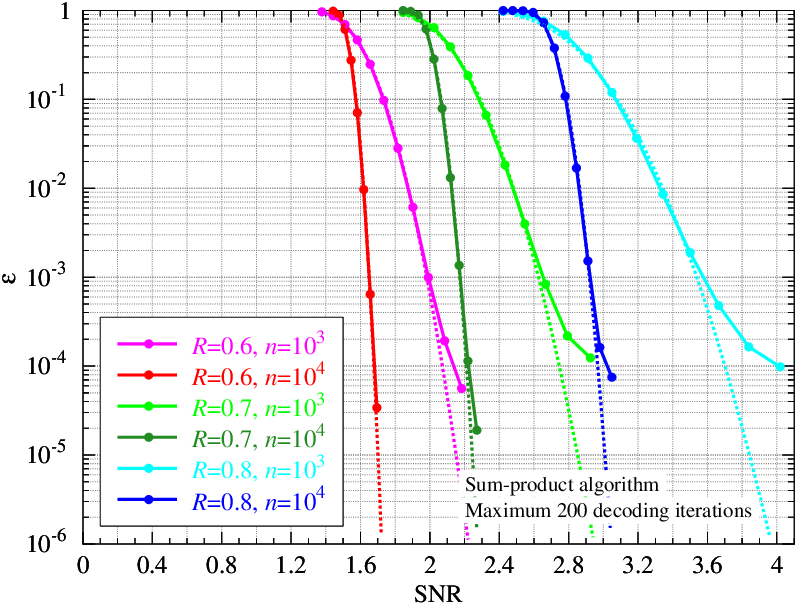}
\caption{Simulated block error rates~$\eps$ of LDPC codes of length $n=10^3$ and $n=10^4$ and code rates $R=0.6$, $R=0.7$ and $R=0.8$ as a function of SNR.}
\label{fig:fer-awgn}
\end{figure}

In Fig.~\ref{fig:xi} we plot the function $\xi(n, \eps; Q)$ and the efficiency results obtained with LDPC codes for reconciling strings following a binary-binary distribution. We chose as representative lengths $10^3$, $10^4$, $10^5$, and $10^6$. For every block length we constructed codes of rates $0.6$, $0.7$ and $0.8$ following $\lambda_1(x)$, $\lambda_2(x)$, and $\lambda_3(x)$. The points in the figure were obtained by puncturing and shortening the original codes~\cite{elkouss2011information,elkouss2012untainted} until the desired block error rate was obtained. The results show an extra inefficiency due to the use of real codes. This inefficiency shares strong similarities with the converse bound, its separation from the asymptotic value is greater for lower values of $Q$, block error rates and lengths and fades as these parameters increase. For example, for $n=10^4$, $Q=1.0\%$ and $\eps=10^{-2}$ the extra inefficiency due to the use of real codes is over $1.2$ while for $n=10^6$, $Q=5.0\%$ and $\eps=10^{-1}$ the extra inefficiency is close to $1.05$.

Similarly, in Fig.~\ref{fig:xi-awgn} we plot $\xi(n, \eps; \sigma)$ and the efficiency obtained with LDPC codes when reconciling strings following binary-Gaussian distributions. Representative lengths were also chosen $10^3$, $10^4$, and $10^5$. Codes of rates $0.6$, $0.7$, and $0.8$, following $\lambda_5(x)$, $\lambda_6(x)$ and $\lambda_7(x)$, respectively, were punctured until the desired block error rate was obtained ($\varepsilon=10^{-1}$). As in Fig.~\ref{fig:xi}, the results show an additional inefficiency due to the use of real codes. 

Finally, we address the design question posed above, that is, we study the efficiency variation as a function of the block error rate for fixed $n$ and noise parameter. We have performed this study only for the binary-binary distribution for computational reasons, but we expect similar results to hold for the binary-Gaussian. In this setting we need code constructions that allow to modulate the rate with fixed block-length.
The most natural modulating option would have been to construct codes for every $n$ of interest and augment \cite{morelos06} the codes, that is, eliminate some of the restrictions that the codewords verify. However, it is known that LDPC codes do not perform well under this rate adaptation technique \cite{varodayan06}. In consequence, we constructed a different code with the PEG algorithm for every rate. In order to obtain  a smooth efficiency curve we used the degree polynomials $\lambda_1(x)$, $\lambda_2(x)$ and $\lambda_3(x)$ for constructing all codes even with coding rates different to the design rate. 

\begin{figure}
\centering
\subfigure[$n=10^3$]{\includegraphics[width=0.6\textwidth]{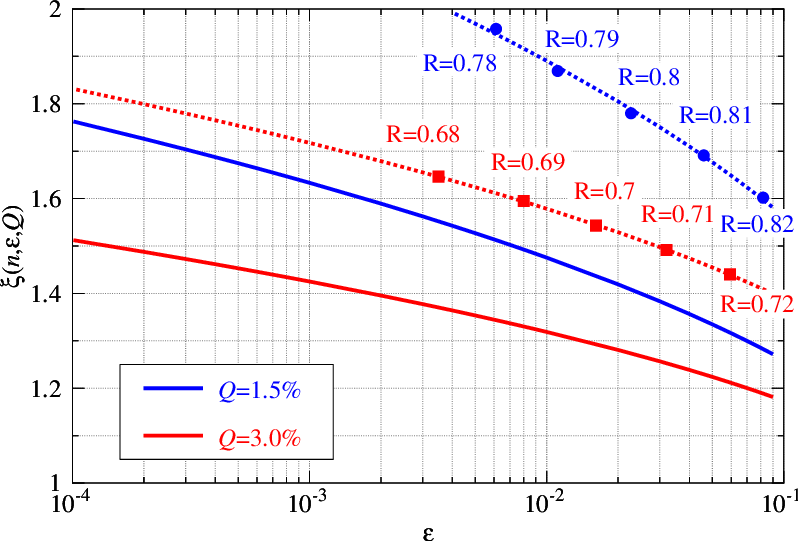}}
\subfigure[$n=10^4$.]{\includegraphics[width=0.6\textwidth]{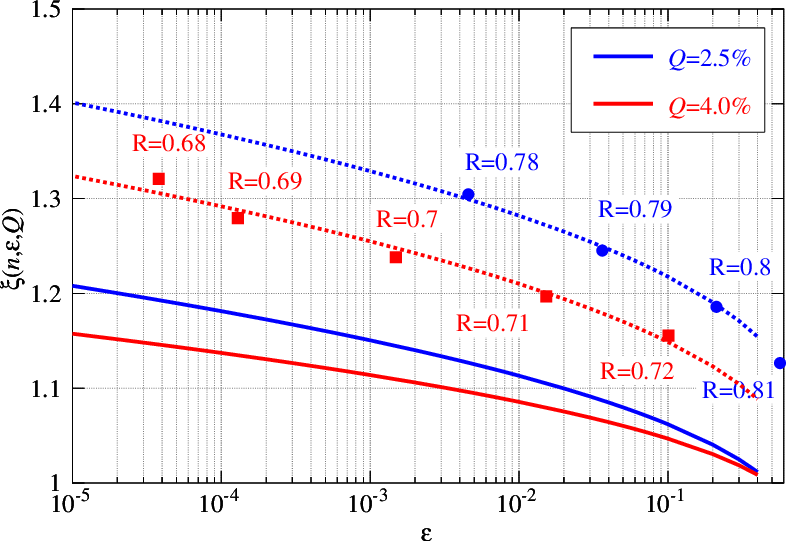}}
\caption{Ratio between the leakage and the asymptotical optimum in several scenarios as a function of the block error rate~$\eps$. Subfigures (a) and (b) show results for block lengths $10^3$ and $10^4$, respectively. In each subfigure the solid lines show the converse bound from Corollary~\ref{th:xi} while the dotted lines show the values achieved with actual LDPC codes.}
\label{fig:fervseff}
\end{figure}

Fig. \ref{fig:fervseff} shows the efficiency as a function of the block error rate. Each of the two subfigures (a) and (b) show the simulation results for codes of length $10^3$ and $10^4$, respectively. Colours blue and red correspond to $Q=1.5\%$ and $3.0\%$ in subfigure (a) and to $2.5\%$ and $4.0\%$ in subfigure (b). 
The solid lines show the bound given by Corollary~\ref{th:xi}, similar to Fig.~\ref{fig:xi} we observe that, ceteris paribus, lower values of $Q$ imply higher values of $\xi$. The points show values achieved by LDPC codes: each point represents the block error rate of a different parity check modulated code. 
Finally the dotted lines show the best least squares fit to Eq.~\ref{eq:fit}, the values of $\xi_1$ and $\xi_2$ can be found in Table~\ref{tab:xi}. From these curves we can extract some useful design information, 1) if the target failure probability is very high \cite{martinez13} then the gain obtained by increasing the block length is modest, 2) if the target failure probability is low (below $10^{-4}$) the leakage is over a fifty percent larger than the optimal one for moderate block lengths and 3) for block-length $10^5$, the largest length for which we could compute simulations in the whole block error rate region, we were unable to consistently offer efficiency values below $1.1$ and furthermore we report no point with $f$ below $1.05$.

Tables \ref{tab:xi} and \ref{tab:xi-awgn} show the values of $\xi_1$ and $\xi_2$ used in Fig.~\ref{fig:xi}, Fig.~\ref{fig:fer}, Fig.~\ref{fig:fervseff} and Fig.~\ref{fig:xi-awgn}, Fig.~\ref{fig:fer-awgn} respectively to fit the data points obtained from the simulations. In these curves $\xi_1$ is ---independently of $\eps$, $n$, $Q$, $\sigma$--- in the range $[1.05,1.16]$ while the $2$nd order deficiency $\xi_2$ is more sensible to the parameter variations. 
In the first four rows of Table \ref{tab:xi}, that correspond to Fig. \ref{fig:xi} with fixed $Q$ and $\eps$, $\xi_2$ is in the range $[2.41,3.82]$, 
for the middle six rows, that correspond to Fig. \ref{fig:fer} with fixed $n$ and leak, $\xi_2$ is in the range $[1.49,1.96]$, while for the last four rows, that correspond to Fig. \ref{fig:fervseff} with fixed $n$ and $Q$, $\xi_2$ is in the range $[1.26,1.58]$. In the first three rows of Table \ref{tab:xi-awgn}, that correspond to Fig.~\ref{fig:xi-awgn} with fixed $\sigma$ and $\eps$, $\xi_2$ is in the range $[2.58,2.71]$, while in the last six rows, that correspond to Fig. \ref{fig:fer-awgn} with fixed $n$ and leak, $\xi_2$ is in the range $[1.07,1.42]$. Note that for each scenario, the averages in these ranges could safely be used for system design purposes since necessarily codes with those $\xi_1$ and $\xi_2$ values or better exist.

\begin{table}
\centering
\begin{tabular}{cccc|cc}
$n$ & $Q$ & $\eps$ & leak &$\xi_1$ & $\xi_2$\\
\hline&&&&\\[-0.28cm]
  -          & $0.010$ & $10^{-2}$ &     -             & $1.13$ & $3.82$ \\
   -         & $0.025$ & $10^{-2}$ &     -             & $1.07$ & $3.71$\\
    -        & $0.050$ & $10^{-2}$ &     -             & $1.06$  & $3.54$ \\ 
     -       & $0.050$ & $10^{-1}$ &     -             & $1.05$ & $2.41$ \\
\hline&&&&\\[-0.28cm]
 $10^3$ & -           &  -               &  $4\cdot 10^2$   & $1.11$ & $1.39$ \\
 $10^3$ & -           &  -               &  $3\cdot 10^2$   & $1.12$ & $1.45$ \\
 $10^3$ & -           &  -               &  $2\cdot 10^2$   & $1.13$ & $1.69$\\
 $10^4$ & -           &  -               &  $4\cdot 10^3$   & $1.07$ & $1.41$\\ 
 $10^4$ & -           &  -               &  $3\cdot 10^3$   & $1.08$  & $1.44$ \\ 
 $10^4$ & -           &  -               &  $2\cdot 10^3$   & $1.11$ & $1.89$ \\
\hline&&&&\\[-0.28cm]
$10^3$ & $0.015$ &    -             &     -             & $1.16$ & $1.52$\\
$10^3$ & $0.030$ &     -            &     -             & $1.16$  & $1.31$ \\ 
$10^4$ & $0.025$ &      -           &     -             & $1.14$ & $1.26$ \\
$10^4$ & $0.040$ &       -          &     -             & $1.07$ & $1.58$\\
\hline\\
\end{tabular}
\caption{Values of $\xi_1$ and $\xi_2$ for the fitted curves in Fig.~\ref{fig:xi}, Fig.~\ref{fig:fer} and Fig.~\ref{fig:fervseff}.}
\label{tab:xi}
\end{table}

\begin{table}
\centering
\begin{tabular}{cccc|cc}
$n$ & SNR & $\eps$ & leak &$\xi_1$ & $\xi_2$\\
\hline&&&&\\[-0.28cm]
  -         & $1.6$ & $10^{-1}$ &     -             & $1.07$ & $2.58$ \\
  -         & $2.1$ & $10^{-1}$ &     -             & $1.06$ & $2.67$ \\
  -         & $2.8$ & $10^{-1}$ &     -             & $1.06$  & $2.74$ \\
\hline&&&&\\[-0.28cm]
 $10^3$ & -           &  -               &  $4\cdot 10^2$   & $1.11$ & $1.23$ \\
 $10^3$ & -           &  -               &  $3\cdot 10^2$   & $1.12$ & $1.34$ \\
 $10^3$ & -           &  -               &  $2\cdot 10^2$   & $1.13$ & $1.40$\\
 $10^4$ & -           &  -               &  $4\cdot 10^3$   & $1.08$ & $1.27$\\ 
 $10^4$ & -           &  -               &  $3\cdot 10^3$   & $1.07$  & $1.42$ \\
 $10^4$ & -           &  -               &  $2\cdot 10^3$   & $1.08$ & $1.33$ \\
\hline\\
\end{tabular}
\caption{Values of $\xi_1$ and $\xi_2$ for the fitted curves in Fig.~\ref{fig:xi-awgn} and Fig.~\ref{fig:fer-awgn}.}
\label{tab:xi-awgn}
\end{table}

\section{Conclusion}
\label{sec:conclusion}

In this paper we studied the  fundamental limits for one-way information reconciliation in the finite key regime. These limits imply that a commonly used approximation
for the information leakage during information reconciliation
is too optimistic for a range of error rates and block-lengths. We proposed a two-parameter approximation that takes into account finite key effects. 

We compared the finite length limits with LDPC codes and found a consistent range of achievable finite-length efficiencies. These efficiencies should be of use to the quantum key distribution systems designer. One question that we leave open is the study of these values for different coding families.

Finally, it is clear that PE and PA also contribute to finite-length losses in the QKD key rate. 
While it seems possible to investigate fundamental limits in PA based on the normal approximation of randomness extraction against quantum side information~\cite{tomamichel12} as a separate problem, we would in fact need to investigate it jointly with IR since there is generally a trade-off between the two tasks that needs to be optimized over.

\paragraph*{Acknowledgements}

MT thanks N.~Beaudry, S.~Bratzik, F.~Furrer, M.~Hayashi, C.C.W.~Lim, and V.Y.F.~Tan for helpful comments and pointers to related work. 
MT is supported by an Australian Research Council Discovery Early Career Researcher Award (DECRA) fellowship. 
JM has been funded by the Spanish Ministry of Economy and Competitiveness through project Continuous Variables for Quantum Communications (CVQuCo), TEC2015-70406-R. 
CP has been funded by the Vienna Science and Technology Fund (WWTF) through project ICT10-067 (HiPANQ). 
DE was supported via STW and the NWO Vidi grant ``Large quantum networks from small quantum devices".

\bibliographystyle{abbrv}
\bibliography{qinp}

\end{document}